\newtheorem{theorem}{Theorem}
\newtheorem{lemma}{Lemma}
\newtheorem{proposition}{Proposition}
\newtheorem{conjecture}{Conjecture}
\newtheorem{example}{Example}
\newenvironment{proof}[1][Proof]{\noindent\textbf{#1. }}{\ \rule{0.5em}{0.5em}}
\begin{document}

\begin{titlepage}
\title{
	Vote Delegation with Unknown Preferences
}

\author{
	Hans Gersbach\\
	\normalsize CER-ETH and CEPR\\
	\normalsize Z\"{u}richbergstrasse 18\\
	\normalsize 8092 Zurich, Switzerland\\ 
	\normalsize \href{mailto:hgersbach@ethz.ch}{hgersbach@ethz.ch}
	\and
	Akaki Mamageishvili\\
	\normalsize CER-ETH\\
	\normalsize Z\"{u}richbergstrasse 18\\
	\normalsize 8092 Zurich, Switzerland\\ 
	\normalsize \href{mailto:amamageishvili@ethz.ch}{amamageishvili@ethz.ch}
	\and
	Manvir Schneider\\
	\normalsize CER-ETH\\
	\normalsize Z\"{u}richbergstrasse 18\\
	\normalsize 8092 Zurich, Switzerland\\ 
	\normalsize \href{mailto:manvirschneider@ethz.ch}{manvirschneider@ethz.ch}
	}

\date{Last updated: \today}

\maketitle

\begin{abstract}
We examine vote delegation when preferences of agents are private information. One group of agents (delegators) does not want to participate in voting and abstains under conventional voting or can delegate its votes to the other group (voters) who decide between two alternatives. We show that free delegation favors minorities, that is, alternatives that have a lower chance of winning ex-ante. The same occurs if the number of voting rights that actual voters can have is capped. When the number of delegators increases, the probability that the ex-ante minority wins under free and capped delegation converges to the one under conventional voting---albeit non-monotonically. 
Our results are obtained in a private value setting but can be readily translated into an information aggregation setting when voters receive a signal about the ``correct" alternative with some probability.
\end{abstract}
\thispagestyle{empty}
\end{titlepage}

\pagebreak \newpage

\setcounter{page}{2}
\onehalfspacing

\section{Introduction}

\subsection{Motivation and Results}

While a particular type of vote delegation in the form of proxy voting in shareholder meetings has a long tradition, the issue whether and how vote delegation should be allowed in other environments has become a prominent theme in scientific and public discussions. ``Liquid democracy", for instance, is a concept entailing the right for citizens to delegate voting rights to other citizens. This has been advocated by several political parties.\footnote{It is used internally by the German Pirate party or the Demoex party in Sweden. It is implemented in several online platforms such as LiquidFeedback \url{www.liquidfeedback.org}  (retrieved on February 18, 2021), and Google Votes \url{https://www.tdcommons.org/cgi/viewcontent.cgi?article=1092&context=dpubs_series} (retrieved on February 18, 2021).  
A further reference is \cite{escoffier-gilbert}.} It is an important way to govern blockchains, see~\cite{tezos} and~\cite{concordium}.
A priori, it is not obvious how liquid democracy compares to standard voting procedures such as conventional voting. It is thus useful to compare liquid democracy to standard voting in specific environments.

In this paper, we examine how voting outcomes are affected when vote delegation is allowed, but delegators do not know the preferences of the proxy to whom they delegate. In particular, we study the following problem. A polity---say a jurisdiction or a blockchain community---decides between two alternatives $A$ and $B$. A fraction of individuals does not want to participate in the voting. Henceforth, these individuals are called ``delegators". They abstain in traditional voting and delegate their vote in liquid democracy. 

There are many reasons why individuals do not want to participate in voting. Either they want to avoid facing the costs of informing themselves or do not want to incur the cost of voting. 

In the context of blockchains and proof-of-stake, individuals may have the option to delegate their stake to other individuals, thereby participating in the rewards from validating transactions without having to run the validation software themselves. Several blockchains have either implemented or announced such schemes (see Tezos\footnote{See \cite{tezos}}, Palkadot\footnote{\url{https://polkadot.network/} (retrieved on October 5, 2021)}, Cardano\footnote{\url{https://docs.cardano.org/core-concepts/delegation} (retrieved on October 5, 2021)} or Concordium\footnote{See \cite{concordium}}).

Our central assumption is that delegators do not know the preferences of the individuals who will vote, that is, whether these individuals favor $A$ or $B$. This is obvious in the context of blockchains in which delegators do not know whether voters are honest (interested in the validation of transactions) or dishonest (interested in creating invalid transactions or in disrupting the system as a whole). For standard democracies, our assumption represents a polar case.

We call individuals willing to vote ``voters", individuals favoring $A$ ($B$) ``$A$-voters" (``$B$-voters"), and we call the group of the  $A$-voters ($B$-voters) ``$A$-party" (``$B$-party"). From the perspective of delegators, of voters themselves and of any outsider who might want to design vote delegation, the probability that a voter favors $A$ is some number $p$ ($0<p<1$). If $p<\frac{1}{2}$, the chances of alternative $A$ to win under standard voting is positive but below $\frac{1}{2}$. We call the $A$-party the ``ex-ante minority" in such cases.

We compare three voting schemes. First, under standard voting, delegation is not allowed and thus delegators abstain. Second, with free vote delegation, arbitrary delegation is allowed. Since delegators do not know the preferences of voters and all voters are thus alike for them, every voter has the same chance to obtain a vote from delegators. 
We thus model free delegation as a random process in which delegated voting rights are randomly and uniformly assigned to voters. This can be achieved in one step where delegators know the pool of voters, or by transitive delegation, in which each individual can delegate his/her voting right to some other individual, who, in turn, can delegate the accumulated voting rights to a next individual. This process stops when an individual decides to use the received votes and exert all voting rights received.
Third, we introduce capped vote delegation. With capped vote delegation, the number of votes an individual can receive and can cast is limited. Essentially, if a voter has reached the cap of voting rights through delegation, s/he can't receive delegated votes anymore. Thus, only voters who have not reached the cap yet will be able to receive further voting rights.

In our analysis, we model the total number of voters by a random variable with distribution $F$, whereas $F$ can be any discrete distribution. 
We establish three main results: First, compared to standard voting, free delegation favors the minority.
This result, which requires an elaborated proof, shows that vote delegation cannot only reverse voting outcomes but increases the probability that the minority wins.

An essential assumption for this result is that the types of voters (and thus their preferences) are stochastically independent. By a counterexample, we show that delegation can favor the majority if this assumption is not made. Hence, a model which allows agents to delegate is {not} the same as adding noise to the model which always favors the minority. However, in the class of distributions where agents receive stochastically independent signals of their type, such as for instance for a Poisson distribution, delegation favors the minority.

Second, the same occurs with a capped delegation. That is, the capped delegation also increases the minority's probability to win. Numerical examples show that the increase is smaller than under free delegation. 
Third, when the fraction of delegators increases compared to voters, the probability that minorities win under free or capped delegation converges to the same probability as under standard voting. However, we see in the case for a Poisson distribution that the convergence is not monotonic. 
The results show that outcomes \textit{with delegation} may significantly differ from \textit{standard} election outcomes and probabilistically violate the majority principle, i.e. a majority of citizens who is willing to cast a vote for a particular alternative may lose. This result might make blockchain companies hesitant to implement vote delegation, as it might raise the risk that dishonest agents become a majority through vote delegation.

\subsection{Application to Information Aggregation}

While we derive our results in the private value setting, the results can be directly applied to information aggregation in a common value setting. The standard modeling is that one alternative is preferred by everybody (henceforth the correct alternative), but individuals lack information which one is  correct. Each individual receives a signal which alternative is correct. The probability that the signal identifies the correct alternative is $p$ and it is assumed that $p > \frac{1}{2}$. Each individual votes for the alternative s/he thinks is correct. 

The Condorcet Jury Theorem states that adding more voters increases the probability that the correct alternative is chosen by the majority. If the number of voters converge to infinity, the probability of implementing the right alternative tends to $1$. 

With our approach, we can study how vote delegation affects the results. Suppose again that one group of individuals does not want to participate in voting. The reason may be the costs or the fact that they may have received a completely uninformative signal, i.e. $p= \frac{1}{2}$. 
Individuals of the other group have received stochastically independent signals which identify the correct alternative with probability $p$, $p > \frac{1}{2}$, and they vote according to their signals. 

Individuals in the first group either abstain under conventional voting or delegate their vote to some individual in the second group.\footnote{If delegators have received uninformative signals, $p= \frac{1}{2}$ and voted, this would be the worst outcome.} Our results imply that delegation lowers the chance that the correct alternative wins. Despite the fact that individuals with a signal that identifies the correct alternative with a probability greater than $\frac{1}{2}$ receive more votes to be cast, the probability that the correct alternative wins declines. The reason is that different votes do not reflect stochastically independent signals. Hence, abstention of all delegators maximizes the probability that the correct alternative achieves a majority of votes.

To illustrate the result, consider the following simple example. There are three voters, two delegators and the probability that each voter receives the correct signal is equal to $p$. 

Suppose first that delegators abstain. Then, the probability that majority voting implements the correct outcome is equal to $r:=p^3+{3\choose 2}p^2(1-p)$. Note that since $p>\frac{1}{2}$, we have $r>p$. That is, with three voters the probability that the correct alternative is implemented is higher than with only one voter. This is a manifestation of the wisdom of crowds and part of the steps to prove the Condorcet Jury Theorem.  

Suppose next that the two delegators delegate their votes uniformly at random to the three voters. Two cases are possible: In the first case, two voters obtain one additional vote while the third voter has only his own vote. In this case, the probability that the correct alternative is implemented by majority of voting remains equal to $r$. In the second case, one voter obtains both votes from the delegator. This happens with probability $1/9$. In this case, the probability that the correct alternative is implemented is equal to $p$, since one voter has three votes and his votes form a majority that is independent of how the other two voters vote. Taking both cases together, we observe that the expected probability that the correct alternative is implemented under delegation is $8/9r + 1/9p$ and thus strictly lower than $r$. Hence, delegation leads to worse outcomes.

\section{Related Literature}
Some recent papers shed new light on how vote delegation impacts outcomes. Important studies examining vote delegation from an algorithmic and AI perspective have developed several delegation rules that allow to examine how many votes a representative can and should obtain\footnote{See \cite{liquid_fluid} and \cite{kotsialou-riley}.} and whether the delegation of votes to neighbours in the network may deliver more desirable outcomes than conventional voting\footnote{See \cite{liquid_algorithmic}.}. The result of \cite{liquid_algorithmic} is similar to our result in spirit. The authors show that even with a delegation from a less informed to a better informed voter, the probability of implementing the right alternative in a generic network is decreasing. The setting studied in the paper is different, though, as it focuses on the information acquisition aspect of voting, with voters having the same preferences. 

Delegation games on networks were studied by~\cite{rational_delegation} and~\cite{escoffier-gilbert}.
~\cite{rational_delegation} identify conditions for the existence of Nash equilibria of such games, while~\cite{escoffier-gilbert} shows that in more general setups, no Nash equilibrium may exist, and it may even be $NP$-complete to decide \textit{whether} one exists at all. We adopt a collective decision framework and study how free delegation and capped delegation impact outcomes when delegators do not know the preferences of the individuals to whom they delegate.

There is recent work on representative democracy, e.g. \cite{PIVATO202052} and \cite{ijcai2019-1}. The former paper studies a model where legislators are chosen by voters and votes are counted according to the weight, which is the number of votes a representative possesses.  The show that the voting outcome is the same in a large election as for conventional voting. \cite{SohVoutsa} extended this model by using other forms of voting, such as \textit{Weighted Approval Voting}.
Furthermore there is work on \textit{Flexible Representative Democracy}, see e.g. \cite{ijcai2019-1}. In this model, first experts are elected and then voters allocate their vote among the representatives or vote directly.
In comparison to our model, we note that delegators delegate to any voter in the society. In Flexible Representative Democracy, random uniform delegation can be achieved by uniformly allocating the vote among the representatives.

\section{Model}

We consider a large polity (a society or a blockchain community) that faces a binary choice between two alternatives $A$ and $B$. There is a group of $m \in \mathbb{N}_+$ individuals who do not want to vote and either abstain under conventional voting or delegate their voting rights under vote delegation. An agent in this group is called a \textit{delegator}. The remaining population votes and is thus called \textit{voters}. Voters have private information about their preference for $A$ or $B$. Hence, when a delegator delegates her voting right to a voter, it is equivalent to uniformly and randomly delegating a voting right to \textit{one} voter. A voter prefers alternative $A$ ($B$) with probability $p$ ($1-p$), where $0<p<1$. Without loss  of generality, we assume $p>\frac{1}{2}$. Voters favoring $A$ ($B$) are called ``$A$-voters" (``$B$-voters") and the respective group is called ``$A$-party" (``$B$-party").
{Note that in the information aggregation setting this corresponds to voters receiving signal $A$ with probability $p$ and signal $B$ with probability $1-p$.}

We assume that the number of voters is given by a random variable with distribution $F$, that is, the probability that there are $i$ voters is equal to $F(i)$. We compare three voting processes:
\begin{itemize}
    \item Conventional voting: Each voter casts one vote.
    \item Free delegation: $m$ delegators randomly delegate their voting rights. Each voter casts all votes that is corresponding to her voting rights.
    \item Capped delegation: $m$ delegators delegate their voting rights randomly to voters. If a voter has reached the cap, all further voting rights are distributed among the remaining voters up to the cap. If all voters have reached their cap, superfluous voting rights are eliminated.
\end{itemize}

We start with conventional voting  and denote by $P(p)$ the probability that $A$ wins. It is calculated by the following formula:

\begin{equation}\label{pr_conv}
    P(p) = \sum_{i=0}^{\infty}F(i)\sum_{k=0}^{i}{i \choose k}p^k(1-p)^{i-k}g(k, i-k). 
\end{equation}

where $g(k,l)$ is the probability that $A$-voters are in majority if there are $k$ $A$-voters and $l$ $B$-voters, it is calculated in the following way: 

\begin{align*}
  g(k,l):=\begin{cases}
               1, & \text{if } k>l, \\
               \frac{1}{2}, & \text{if } k=l,\\
               0, & \text{if } k<l.
            \end{cases}    
\end{align*}

In case of free vote delegation, the probability that $A$ wins, denoted by $P(p,m)$,  is equal to:

\begin{equation}\label{pr_free_del}
    {P}(p,m) = \sum_{i=0}^{\infty}F(i)\sum_{k=0}^{i}{i \choose k}p^k(1-p)^{i-k}G(k, i-k,m). 
\end{equation}

where $G(k,l,m)$ denotes the probability that $A$-voters win if there are $k$ $A$-voters, $l$ $B$-voters and $m$ voters delegate their votes. The value is calculated in the following way:

\begin{equation*}
G(k,l,m)=
    \begin{cases}
        \frac{1}{2} & \text{if } k = l = 0,\\
        \sum_{h=0}^{m}{m \choose h}\left(\frac{k}{k+l}\right)^h\left(\frac{l}{k+l}\right)^{m-h}g(k+h, l+m-h) & \text{else}.
    \end{cases}
\end{equation*}

We now consider the delegation procedure when delegated votes are distributed randomly among the other voters, with the restriction that no voter obtains more than $c$ votes. If a voter already has $c$ votes, s/he is not allowed to receive more. We invalidate the rest of delegated votes, if there are any.
With cap $c$, the probability that $A$-voters have a majority, denoted as $P_c(p,m)$, is equal to

\begin{equation}\label{pr_cap_del}
    P_c(p,m) = \sum_{i=0}^{\infty}F(i)\sum_{k=0}^{i}{i \choose k}p^k(1-p)^{i-k}G_c(k, i-k,m). 
\end{equation}

where $G_c(k,l,m)$ denotes the probability that $A$-voters win if there are $k$ $A$-voters, $l$ $B$-voters and $m$ voters who delegate their votes uniformly, with an individual voter being allowed to have $c$ votes at most. The value is calculated in the following way:

\begin{equation*}
G_c(k,l,m)=
    \begin{cases}
        \frac{1}{2} &  \text{if } k = l = 0,\\
        \sum_{h=0}^{m}{m \choose h}\left(\frac{k}{k+l}\right)^h\left(\frac{l}{k+l}\right)^{m-h}g(\min\{k+h,ck\}, \min\{l+m-h, cl\}) & \text{else}.
    \end{cases}    
\end{equation*}

Since $ck$ is the maximum amount of votes majority voters can have, we take $\min\{k+h,ck\}$ to calculate the total number of votes for majority voters. Similarly, $cl$ is the maximum amount of votes minority voters can have. Therefore, we take a value $\min\{l+m-h, cl\}$ to calculate the total number of votes minority voters can have. 

For notational convenience, we denote a binomial random variable with parameters $n$ and $p$ by $Bin(n,p)$, for any $n\in \mathbb{N}$ and $p\in[0,1]$.

\section{Results}

To gain an intuition for the formal results, we start with numerical examples. For this we model the number of voters as a  Poisson random variable, with average $n$. That is, $F(i) = \frac{n^i}{e^n i!}$. To indicate that we are considering a specific distribution with a parameter, we include the parameter in the function $P()$. Thus, if we consider a Poisson distribution with parameter $n$, then we write $P(n,p)$ instead of $P(p)$ and $P(n,p,m)$ instead of $P(p,m)$. Accordingly, for the capped case.

This assumption is a standard tool to make the analysis of voting outcomes analytically tractable. Namely, by {\it decomposition property}, the number of $A$-voters is Poisson random variable with average $n\cdot p$ and the number of $B$-voters is Poisson random variable with average $n \cdot (1-p)$.  Moreover, these two random variables are independent. 
Poisson games were introduced in~\cite{poisson}.

\begin{table}[h!]
    \centering
    \begin{tabular}{|c|c|c|c|c|c|}
        \hline
         $n$ & $p$ & $m$ & $P(n,p)$ & $P(n,p,m)$ & $P_2(n,p,m)$  \\ \hline
         $20$ & $0.6$ & $1$ & $0.81413$ & $0.808443$ & $0.808443$\\ \hline
         $20$ & $0.6$ & $2$ & $0.81413$ & $0.804256$ & $0.804256$\\ \hline 
         $20$ & $0.6$ & $5$ & $0.81413$ & $0.796578$ & $0.796616$\\ \hline
         $20$ &  $0.6$ &  $10$ &  $0.81413$ &  $0.791246$ & $0.792627$ \\ \hline
         $20$ &  $0.6$ &  $300$ &  $0.81413$ &  $0.808516$ & $0.81413$ \\ \hline
    \end{tabular}
    \caption{Probabilities of $A$ winning under conventional voting $P(n,p)$, free delegation $P(n,p,m)$ and capped delegation $P_c(n,p,m)$ for cap $c = 2$.}
    \label{table:example}
\end{table}

Table~\ref{table:example} reveals that the likelihood of $A$ winning is smaller with free delegation than under conventional voting. The same occurs with capped delegation but is slightly less pronounced than with free delegation. We also observe that when $m$ increases, both the winning probabilities for free delegation and for capped delegation first decline and then start to converge to the probability under conventional voting. The pattern in Table~\ref{table:example} repeats for different values of $n$ and $p$. In the following, we prove formal results.

\subsection{Free Delegation}

We start with free delegation and show the following:

\begin{theorem}\label{thm}
$P(p,m)<P(p)$ for any $m\geq 1$ and $p>0.5$.
\end{theorem}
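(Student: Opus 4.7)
The plan is to condition on the realized number of voters and then exploit a pairing between the indices $k$ and $i-k$ in the inner sum of equation (2), using an antisymmetry identity that mirrors the one for $g$.

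The first step is an antisymmetry identity $G(k,l,m) + G(l,k,m) = 1$, analogous to $g(k,l) + g(l,k) = 1$. This follows by making the change of variable $h \mapsto m-h$ in the binomial sum defining $G(l,k,m)$ and invoking $g(l+m-h, k+h) + g(k+h, l+m-h) = 1$ termwise. Setting $f_i(k) := g(k, i-k) - G(k, i-k, m)$, the two identities together yield $f_i(k) + f_i(i-k) = 0$. Moreover, since $g(k, i-k) = 1$ for $k > i/2$ while $G(k, i-k, m) \in [0,1]$, we get $f_i(k) \ge 0$ for $k > i/2$; and when $i$ is even, the symmetry of $\mathrm{Bin}(m,1/2)$ yields $G(i/2, i/2, m) = 1/2$, hence $f_i(i/2) = 0$.

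The second step is the pairing. Subtracting (2) from (1) and regrouping the inner sum by pairs $\{k, i-k\}$ gives
\[
P(p) - P(p,m) \;=\; \sum_{i=0}^{\infty} F(i) \sum_{k > i/2} \binom{i}{k}\, f_i(k) \bigl[\, p^k(1-p)^{i-k} - p^{i-k}(1-p)^k \,\bigr].
\]
For $p > 1/2$ and $k > i-k$, the bracketed factor is strictly positive, so each summand is non-negative and $P(p) - P(p,m) \ge 0$.

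For strictness, I would exhibit one triple $(i,k)$ with $f_i(k) > 0$. Take any odd $i \ge 3$ in the support of $F$ (the Poisson $F$ used throughout the examples supplies many) and let $k = (i+1)/2$, so $0 < i-k < k$. With positive probability $\bigl((i-k)/i\bigr)^m > 0$ all $m$ delegated votes land with $B$-voters, producing totals $k$ for $A$ and $i-k+m \ge k$ for $B$; hence $A$ fails to win outright in this event, so $G(k, i-k, m) < 1 = g(k, i-k)$ and $f_i(k) > 0$. The corresponding summand is strictly positive, completing the argument. The step I expect to require the most care is the antisymmetry identity $G(k,l,m) + G(l,k,m) = 1$, since without a clean symmetric statement the sign pattern of each paired term is not transparent; once that identity is in hand, the remainder reduces to the elementary observation that $p^k(1-p)^{i-k} > p^{i-k}(1-p)^k$ when $p > 1/2$ and $k > i-k$, which is precisely where the hypothesis $p > 1/2$ enters.
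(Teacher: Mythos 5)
Your proof is correct and follows essentially the same route as the paper: the antisymmetry identity $G(k,l,m)+G(l,k,m)=1$ (the paper's Lemma~1, proved there via a symmetry/complementary-events argument rather than your change of variables $h\mapsto m-h$), the pairing of $k$ with $i-k$ in the inner sum, and the inequality $p^k(1-p)^{i-k} > p^{i-k}(1-p)^{k}$ for $k>i-k$ and $p>\tfrac{1}{2}$. If anything, your explicit strictness step---exhibiting an odd $i\geq 3$ in the support of $F$ for which $G(k,i-k,m)<1$---is handled more carefully than in the paper, which asserts that each $\Delta_i$ is strictly positive even though the paired terms all vanish for small $i$ (e.g.\ $i\leq 2$), so strict inequality really does require such an $i$ in the support of $F$.
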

\begin{proof}
See Appendix \ref{app:proofs}.
\end{proof}

In other words, Theorem~\ref{thm} says that free delegation probabilistically favors the ex-ante minority, since the probability that the minority wins under vote delegation is $1-P(p,m)$, while the probability that the minority wins in conventional voting is equal to $1-P(p)$. A trivial observation is that if $m=0$ then both $P(p,m)$ and $P(p)$ coincide since there is no vote to be delegated.

We note that our result rests on the assumption that types (preferences) of voters are stochastically independent. By a counterexample we illustrate that delegation may favor the majority in alternative uncertainty models. Suppose, for instance, there exists a joint distribution $J$ over the pair of natural numbers $(a,b)$, where $a$ denotes the number voters for alternative $A$ and $b$ denotes the number of voters for alternative $B$. Suppose the probability that there are more $A$-voters than $B$-voters is strictly larger than $\frac{1}{2}$. We next construct a distribution $J$ such that the probability that alternative $A$ is winning increases when there is delegation.
\begin{example}
We assume that the joint distribution $J$ is the product of two distribution functions $f$ and $g$, i.e.  $J(a,b) = f(a)\cdot g(b)$. Suppose that  $f(4)=0.6, f(1) = 0.4$ and $g(2) = 1$. Then $J(4,1) = 0.6$ and $J(1,2) = 0.4$ and $m=1$. The probability that alternative $A$ wins under conventional voting when the delegator abstains is $0.6$. The probability that alternative $A$ wins after the delegator gives his voting right to one of the voters is\footnote{One can construct examples that the increase of the likelihood that the majority wins is large and in the limit as  high as 0.25.}

$$0.6+f(1)\cdot \frac{1}{2} \cdot \frac{1}{3}>0.6.$$  
\end{example}

\subsection{Capped Delegation}\label{sc:capped_delegation}
The idea of capped delegation is to cap the number of votes that an individual voter can aggregate. 
Capped delegation is done according to Algorithm~\ref{alg:cap}.
\begin{algorithm}[H]
\caption{Capped Delegation}\label{alg:cap}
\begin{algorithmic}
\Require $m \geq 1$, $c \geq 1$ \Comment{Number of delegators $m$; Cap $c$}
\Require $i$ \Comment{Number of voters (There are $i$ voters with probabilty $F(i)$)}

\State $reachedCap \gets 0$
\State $v_1,...,v_i \gets 1$ \Comment{Each voter $j$ has $v_j$ votes}
\While{$m \neq 0 \textbf{ and } reachedCap \neq i$}
\State \text{Pick a random voter $j \in \{1,...,i\}$ (with probability $1/i$)}
\If{$v_j < c$}
    \State $v_j \gets v_j + 1$
    \State $m \gets m-1$
    \If{$v_j = c$}
        \State $reachedCap \gets reachedCap +1$
    \EndIf
\EndIf
\EndWhile
\end{algorithmic}
\end{algorithm}
The algorithm requires the number of delegators $m$, the cap $c$ and the number of voters which is a random variable with distribution $F$. In the beginning voters have 1 vote each (i.e. $v_i = 1$ for all $i$) and no voter has reached the cap and hence the variable $reachedCap$ is set to 0. As long as there are votes to be delegated, that is, $m\neq 0$ and there is still some voter who has not reached the cap, votes are delegated as follows: First a voter $j$ is chosen uniformly and at random. Then, if this voter has not reached the cap (i.e. $v_j<c$) then the voter gets an additional vote and the number of votes to be delegated is decreased by one. If a voter reaches the cap we increase $reachedCap$ by one.

We establish the following results in the case of capped delegation:

\begin{theorem}\label{thm2}
$P_c(p,m)<P(p)$ for any $m\geq 1$, $c>1$ and $p>0.5$.
\end{theorem}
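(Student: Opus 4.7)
My approach is to mirror the pairing/symmetry argument that underlies Theorem~\ref{thm}, suitably adapted so that the cap affects both parties symmetrically. Conditioning on the realized number of voters $i$,
\begin{equation*}
P(p)-P_c(p,m) \;=\; \sum_{i=0}^{\infty} F(i)\sum_{k=0}^{i}\binom{i}{k} p^{k}(1-p)^{i-k}\bigl[g(k,i-k)-G_c(k,i-k,m)\bigr],
\end{equation*}
so it is enough to show that each inner sum is nonnegative with at least one term strictly positive.

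The first ingredient I would establish is the reflection identity $G_c(k,l,m)+G_c(l,k,m)=1$ for every $(k,l)$. It follows directly from the definition: reindexing $h\mapsto m-h$ inside the sum for $G_c(l,k,m)$ preserves both the binomial weight and the product $(k/(k+l))^{h}(l/(k+l))^{m-h}$, while swapping the pair of arguments of $g$; summing term-by-term with $g(a,b)+g(b,a)=1$ collapses the sum to $1$. A direct corollary is $G_c(k,k,m)=1/2$.

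Next I would pair the index $k$ with $i-k$ inside the inner sum. The diagonal term $k=i-k$ drops out because $g(k,k)=G_c(k,k,m)=1/2$. For $k>l:=i-k$, substituting $g(k,l)=1$, $g(l,k)=0$ and $G_c(l,k,m)=1-G_c(k,l,m)$ reduces the combined contribution of the paired indices to
\begin{equation*}
\binom{i}{k}\bigl(1-G_c(k,l,m)\bigr)\bigl[p^{k}(1-p)^{l}-p^{l}(1-p)^{k}\bigr],
\end{equation*}
which is nonnegative because $G_c(k,l,m)\leq 1$ and because $p>1/2$ together with $k>l$ gives $p^{k}(1-p)^{l}\geq p^{l}(1-p)^{k}$. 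This already yields the weak inequality $P_c(p,m)\leq P(p)$.

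The main obstacle is the strict inequality. I would reduce it to producing a single pair $(k,l)$ with $k>l\geq 1$, $F(k+l)>0$ and $G_c(k,l,m)<1$, i.e.\ a voter split close enough that the $m$ delegated votes can, with positive probability, tie or flip the outcome despite the cap. Inspecting the $h=0$ summand of $G_c(k,l,m)$, which occurs with probability $(l/(k+l))^{m}>0$, I would argue that for $c>1$ and $m\geq 1$ one can choose the most balanced split compatible with the support of $F$ so that the capped tally $(\min\{k,ck\},\min\{l+m,cl\})=(k,\min\{l+m,cl\})$ delivers either a tie or a $B$-majority, forcing $G_c(k,l,m)<1$. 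A short case split covers the configurations in which the cap $cl$ binds before $l+m$ does, and rules out the degenerate supports of $F$ (for instance $F$ concentrated on $\{0,1,2\}$) for which the inequality is tight; modulo these boundary cases, a strictly positive pair always exists and $P_c(p,m)<P(p)$ follows.
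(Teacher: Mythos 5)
Your argument is essentially the paper's own proof of Theorem~\ref{thm2}: the same conditioning on the realized number of voters $i$, the same reflection identity $G_c(k,l,m)+G_c(l,k,m)=1$ (the paper's Lemma~\ref{lemma_G_cap_1}, which it proves by a symmetry/complementarity appeal where you reindex $h\mapsto m-h$ explicitly -- both are fine), and the same pairing of $k$ with $i-k$ using $p^k(1-p)^{i-k}\geq p^{i-k}(1-p)^k$ for $k>i-k$, $p>\tfrac12$. Up to the weak inequality $P_c(p,m)\leq P(p)$ your write-up is correct and complete.

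The one place you diverge is strictness, and your caution there is in fact warranted rather than a defect of your approach: the paper's proof simply asserts $P(p)-P_c(p,m)>0$ from the strict coefficient comparison, but that alone only gives $\geq$, since the paired contribution is $\binom{i}{k}\bigl(1-G_c(k,l,m)\bigr)\bigl[p^k(1-p)^l-p^l(1-p)^k\bigr]$ and the factor $1-G_c(k,l,m)=G_c(l,k,m)$ can vanish for every pair in the support of $F$. For example, if $F$ is concentrated on $\{0,1\}$, or on $\{4\}$ with $m=1$, $c=2$, then $P_c(p,m)=P(p)$ exactly, so the theorem as stated needs an implicit support assumption that the paper never makes explicit. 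Your reduction -- exhibit one $(k,l)$ with $k>l\geq 1$, $F(k+l)>0$ and $G_c(l,k,m)>0$ -- is exactly the right form; to close it cleanly, note that $G_c(l,k,m)>0$ iff $\min\{l+m,cl\}\geq k$ (take $h=m$), so any odd $i\geq 3$ in the support of $F$ suffices: with $k=\frac{i+1}{2}$, $l=\frac{i-1}{2}\geq 1$, $m\geq 1$ and $c\geq 2$ give $\min\{l+m,cl\}\geq l+1=k$. Spelling that out would make your proof of strictness more honest and more complete than the published one; as written, your final step remains a sketch ("modulo these boundary cases"), so you should either state the support condition as a hypothesis or verify it for the class of $F$ intended (e.g.\ Poisson, which charges all of $\mathbb{N}$).
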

\begin{proof}
See Appendix \ref{app:proofs}.
\end{proof}

Theorem~\ref{thm2} shows that capped delegation probabilistically favors the ex-ante minority. We note that if $c=1$ or $m=0$ then both $P_c(p,m)$ and $P(p)$ coincide since every voter can have at most one vote or there is no vote being delegated in the other case.

\begin{conjecture}\label{thm3}
$P(p,m) \leq P_c(p,m)$ for any $m\geq 1$, $c\geq 1$ and $p>0.5$.
\end{conjecture}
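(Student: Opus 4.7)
The plan is to reduce the inequality $P(p,m) \le P_c(p,m)$ to a pointwise comparison between $G_c(k,l,m)$ and $G(k,l,m)$ via a swap-symmetry argument that exploits $p > 1/2$. First, by the change of variables $h \mapsto m-h$ in the defining sums, together with the identity $g(a,b) + g(b,a) = 1$, one checks that $G(k,l,m) + G(l,k,m) = 1$ and, analogously, $G_c(k,l,m) + G_c(l,k,m) = 1$, since the cap truncation commutes with the $A$--$B$ swap. Setting $\Delta(k,l,m) := G_c(k,l,m) - G(k,l,m)$, these identities yield $\Delta(k,l,m) = -\Delta(l,k,m)$, and in particular $\Delta(k,k,m) = 0$.

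Writing $P_c(p,m) - P(p,m) = \sum_{i} F(i) \sum_{k=0}^{i} \binom{i}{k} p^{k}(1-p)^{i-k}\Delta(k,i-k,m)$ and pairing the $k$-term with the $(i-k)$-term, the net coefficient of $\Delta(k,i-k,m)$ for $k > i-k$ becomes $\binom{i}{k}\bigl[p^{k}(1-p)^{i-k} - p^{i-k}(1-p)^{k}\bigr]$, which is strictly positive because $p > 1-p$ and $k > i-k$. Hence it suffices to prove the pointwise claim $\Delta(k,l,m) \ge 0$ whenever $k > l$.

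For that, I would establish the stronger statement that $g(\min(k+h, ck), \min(l+m-h, cl)) \ge g(k+h, l+m-h)$ for every $h \in \{0, \dots, m\}$; summing against the binomial weights $\binom{m}{h}(k/(k+l))^{h}(l/(k+l))^{m-h}$ then delivers $G_c \ge G$. Since $g$ depends only on the sign of the difference of its arguments, it is enough to show the implication $k+h \ge l+m-h \;\Longrightarrow\; \min(k+h, ck) \ge \min(l+m-h, cl)$. A four-case analysis according to which party is capped suffices: (i) neither capped, where the $\min$s do not bind and the conclusion is immediate; (ii) only $A$ capped, where $\min(l+m-h, cl) \le cl < ck = \min(k+h, ck)$, using $l < k$; (iii) only $B$ capped, where $\min(k+h, ck) = k+h \ge l+m-h > cl = \min(l+m-h, cl)$, using the hypothesis; and (iv) both capped, where the two $\min$s are $ck$ and $cl$, and $ck > cl$.

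The argument is entirely elementary: the only places where $k > l$ is essentially used are cases (ii) and (iv), while the only place where $p > 1/2$ enters is in the pairing step, converting the antisymmetry of $\Delta$ into a signed inequality. I do not foresee a substantive obstacle; the one technical point worth double-checking is the symmetry identity $G_c(k,l,m) + G_c(l,k,m) = 1$, which is needed so that asymmetric contributions in the paired sum cancel correctly, but this follows from the fact that the joint map $(k,l,h) \mapsto (l,k,m-h)$ swaps the two arguments of the capped $g$.
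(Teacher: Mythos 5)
You should know at the outset that the paper does not prove this statement: it is explicitly left as a conjecture, supported only by the trivial cases ($c\ge m+1$ or $m=1$, where $P_c(p,m)=P(p,m)$; $c=1$, where $P_1(p,m)=P(p)$ and Theorem~\ref{thm} applies) and by numerical evidence such as Table~\ref{table:example}. So there is no paper proof to compare against, and the relevant question is whether your argument stands on its own. As far as I can check, it does. Your pairing step is exactly the mechanism of the paper's proofs of Theorems~\ref{thm} and~\ref{thm2}: the symmetry identities $G(k,l,m)+G(l,k,m)=1$ and $G_c(k,l,m)+G_c(l,k,m)=1$ (the latter via $(k,l,h)\mapsto(l,k,m-h)$, which is Lemma~\ref{lemma_G_cap_1} in slightly greater generality) give the antisymmetry of $\Delta$, and for $p>\tfrac12$ the paired binomial coefficients put a positive weight on $\Delta(k,i-k,m)$ with $k>i-k$. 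The genuinely new ingredient, which the paper lacks, is the pointwise monotonicity $G_c(k,l,m)\ge G(k,l,m)$ for $k>l$, and your four-case analysis of $g(\min\{k+h,ck\},\min\{l+m-h,cl\})$ versus $g(k+h,l+m-h)$ is sound: whenever a cap binds and $k>l$, the capped comparison is strict ($ck>cl$, or $ck>cl\ge l+m-h$, or $k+h\ge l+m-h>cl$), and when no cap binds nothing changes, so an uncapped $A$-win is never demoted and an uncapped tie is never turned into a loss.

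One wrinkle to fix in the write-up: the reduction as you state it, ``$k+h\ge l+m-h \Rightarrow \min\{k+h,ck\}\ge\min\{l+m-h,cl\}$'', is not quite enough, because $g$ distinguishes strict wins from ties; you need ``$k+h> l+m-h \Rightarrow \min\{k+h,ck\}>\min\{l+m-h,cl\}$'' together with the tie case ``$k+h= l+m-h \Rightarrow \min\{k+h,ck\}\ge\min\{l+m-h,cl\}$''. Your cases (ii)--(iv) already deliver strict inequalities and case (i) leaves both totals unchanged, so the proof goes through verbatim once the implication is restated in this two-part form. Note also that the argument yields only the weak inequality, which matches the conjecture and is tight (e.g.\ $c\ge m+1$), and that it proves the statement for $P_c$ as defined through the party-level truncation in the formula for $G_c$ (equation~\eqref{pr_cap_del}), which is the object the conjecture is about, not for the redistribution dynamics of Algorithm~\ref{alg:cap}. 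With these caveats, what you have is a proof of something the paper only conjectures.
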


Conjecture~\ref{thm3} shows that capped delegation is better for the ex-ante majority than free delegation. Together with Theorem~\ref{thm2}, it implies that capped delegation is in-between standard voting (no delegation) and free delegation, with respect to the probability of majority winning. 
Note that if $c\geq m+1$, then both $P(p,m)$ and $P_c(p,m)$ coincide since the cap is higher than the total amount of delegated votes. For $c=1$, we have strict inequality following from Theorem~\ref{thm} because $P_1(p,m)=P(p)$. Finally, for $m=1$ and $c>1$, again both $P(p,m)$ and $P_c(p,m)$ coincide. Numerical calculations support the statement of the conjecture for the remaining cases of $c$ and $m$, see e.g., Table~\ref{table:example}.

\subsection{Asymptotic Behavior of Delegation}

In this section, we consider three cases of asymptotic behavior of delegation. In the first, the probability that a voter is in the majority is large. In the second, the average total number of voters is large. Both cases describe typical assumptions in real-world situations. In the first, one assumes there are very few voters in the minority, for example malicious voters in the context of blockchains. In the second, we look at a large electorate, with fixed shares of majority and minority voters. In the third, we show the convergence of  free delegation towards conventional voting as the number of delegators goes to infinity. 

First, we obtain a rather straightforward proposition:

\begin{proposition}\label{prop: lim_p_to_1}
For any fixed natural number $m \geq 1$, we have that $\lim_{p\rightarrow 1}P(p,m)=1$. 
\end{proposition}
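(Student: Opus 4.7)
The plan is to exploit the fact that as $p \to 1$ essentially every voter becomes an $A$-voter, in which case $A$ is certain to win \emph{regardless} of how the $m$ delegated votes are distributed. Concretely, I would start from the formula~\eqref{pr_free_del} and isolate, for each $i \ge 1$, the single term in the inner sum corresponding to $k = i$ (all voters prefer $A$). For this term, $l = i - k = 0$, and in the definition of $G(k,l,m)$ the factor $\bigl(l/(k+l)\bigr)^{m-h} = 0^{m-h}$ vanishes unless $h = m$, so $G(i, 0, m) = g(i+m, 0) = 1$ whenever $i \ge 1$. This captures the intuition that if every voter is an $A$-voter then every delegated vote is cast for $A$.

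Dropping the other (nonnegative) terms from the double sum produces the lower bound
\begin{equation*}
    P(p, m) \;\geq\; \sum_{i = 1}^{\infty} F(i)\, p^{i}.
\end{equation*}
Now $0 \le F(i) p^{i} \le F(i)$ and $\sum_{i \ge 0} F(i) = 1$, so dominated convergence (or, equivalently, Abel's theorem applied to the probability generating function of $F$) allows us to pass the limit inside the sum:
\begin{equation*}
    \lim_{p \to 1^{-}} \sum_{i = 1}^{\infty} F(i)\, p^{i} \;=\; \sum_{i = 1}^{\infty} F(i) \;=\; 1 - F(0).
\end{equation*}
Combined with the trivial upper bound $P(p,m) \le 1$, this yields $\lim_{p\to 1} P(p,m) = 1$ under the implicit modelling convention $F(0) = 0$ (there is always at least one voter to cast or receive a vote).

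There is no real obstacle: the combinatorial heart of the argument is the one-line evaluation $G(i,0,m) = 1$ for $i \ge 1$, and the only analytical subtlety is the interchange of limit and infinite sum, handled by dominated convergence. If one wanted to be explicit about why the bound above is already tight in the limit, one could also upper bound $P(p,m) \le F(0)\cdot\tfrac{1}{2} + \sum_{i\ge 1}F(i)\bigl(p^{i} + (1-p^{i})\bigr) = 1 - F(0)/2$, which makes clear that the assumption $F(0)=0$ is exactly what is needed for the stated limit to equal $1$ rather than $1 - F(0)/2$.
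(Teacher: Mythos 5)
Your proposal is correct, and its combinatorial core is the same as the paper's: when every voter is an $A$-voter, every delegated vote is cast for $A$, i.e.\ $G(i,0,m)=g(i+m,0)=1$ for $i\geq 1$. Where you differ is in the analytic wrapping. The paper evaluates $P(1,m)=1$ and then simply asserts that $P(\cdot,m)$ is (uniformly) continuous in $p$ to conclude; you instead discard nonnegative terms to obtain the lower bound $P(p,m)\geq\sum_{i\geq 1}F(i)\,p^{i}$ and pass to the limit by dominated convergence, which avoids the unproved continuity claim and is the more self-contained route (the paper's continuity assertion would itself need a Weierstrass-type argument of exactly the kind you use). You also make explicit a point the paper glosses over: the $i=0$ term contributes $G(0,0,m)=\tfrac{1}{2}$, not $g(m,0)=1$, so the paper's evaluation $P(1,m)=\sum_i F(i)\,g(i+m,0)$ is only valid when $F(0)=0$, and in general $\lim_{p\to 1}P(p,m)=1-F(0)/2$, exactly as your upper bound indicates. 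So your argument buys a rigorous limit interchange and an honest statement of the hypothesis ($F(0)=0$) under which the proposition as stated is true, at essentially no extra cost.
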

\begin{proof}
See Appendix \ref{app:proofs}.
\end{proof}

Proposition~\ref{prop: lim_p_to_1} shows that for any number of delegators the probability that alternative $A$ wins is 1 when the probability that a voter is an $A$-voter goes to 1.

Next, we consider a large electorate, which we model using \textit{Poisson random variable} with parameter $n$. It is straightforward that with a constant number of delegators, the probability that the majority wins converges to one as the total population size converges to infinity. We show that the same holds even if there are arbitrarily many delegators.

\begin{proposition}\label{prop: lim_n_to_inf}
$\lim_{n\rightarrow \infty}P(n,p,m)=1$ for any fixed $p>0.5$ and any $m$, where $m$ can even depend on $n$.
\end{proposition}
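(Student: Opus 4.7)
The strategy is to show that the ex-ante advantage of $A$-voters, which is of order $n$, dominates the fluctuations introduced by random delegation, uniformly in $m$. By the decomposition property, write $k \sim \mathrm{Poisson}(np)$ and $l \sim \mathrm{Poisson}(n(1-p))$ independently, and let $H \mid (k,l) \sim \mathrm{Bin}(m, k/(k+l))$ denote the number of delegated votes that land with $A$-voters. Set $V_A = k + H$ and $V_{\mathrm{tot}} = k + l + m$, so that up to ties (whose contribution will vanish in the limit) $A$ wins whenever $V_A > V_{\mathrm{tot}}/2$. The plan is to condition on $(k,l)$, bound the conditional mean of $V_A - V_{\mathrm{tot}}/2$ from below and its conditional variance from above, and then finish with Chebyshev's inequality.

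First I would apply Chebyshev to the Poisson marginals $k,l$ (whose variances equal their means $np, n(1-p)$) to isolate a ``good'' event $E_n$ on which both $k - l \ge n(2p-1)/2$ and $k + l \ge n/2$. Since $\mathrm{Var}(k) + \mathrm{Var}(l) = O(n)$, the event $E_n$ has probability at least $1 - O(1/n)$. On $E_n$, a direct computation gives
\[
\mathbb{E}[V_A - V_{\mathrm{tot}}/2 \mid k,l] \;=\; \frac{(k+l+m)(k-l)}{2(k+l)} \;\ge\; \frac{(n/2 + m)(2p-1)}{8},
\]
while the conditional variance satisfies $\mathrm{Var}(H \mid k,l) \le m/4$.

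The key step is to make the Chebyshev bound vanish uniformly in $m$. Up to constants the squared-mean-to-variance ratio is at least $(2p-1)^2 (n/2 + m)^2/m$, and by the AM--GM inequality $(n/2 + m)/\sqrt{m} \ge \sqrt{2n}$, so this ratio grows like $n$ regardless of how $m$ depends on $n$. Hence $P(V_A \le V_{\mathrm{tot}}/2 \mid k,l) = O(1/n)$ on $E_n$, and combining with $P(E_n^c) = O(1/n)$ yields $P(n,p,m) \to 1$.

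The main obstacle is precisely this uniformity in $m$: when $m$ grows faster than $n$, the delegated votes dominate the original electorate, so one cannot rely on the deterministic lead $k - l$ alone. The argument must instead exploit the concentration of the delegation success parameter $k/(k+l)$ around $p > 1/2$, and the AM--GM manoeuvre above is what packages the small-$m$ and large-$m$ regimes into a single bound that is sharp enough to survive Chebyshev.
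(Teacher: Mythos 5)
Your route is genuinely different from the paper's and, in outline, it works. The paper proves the proposition by applying Poisson Chernoff-type tail bounds to get $K\geq n(\tfrac12+\epsilon)$ and $L\leq n(\tfrac12-\epsilon)$ with probability tending to $1$, and then splits into two regimes: if $m<2\epsilon n$ the deterministic lead $K-L$ already exceeds all delegated votes, while if $m>2\epsilon n$ it invokes the law of large numbers for the delegated votes, which individually favor $A$ with probability at least $\tfrac12+\epsilon$. Your proof replaces both the Chernoff bounds and the case split by plain Chebyshev estimates, and the AM--GM step $(n/2+m)^2\geq 2nm$ is exactly what merges the two regimes into a single bound that is uniform in $m$; as a bonus you get an explicit $O(1/n)$ rate, which the paper's argument does not state. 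So the conditional-mean/variance decomposition plus Chebyshev is a legitimately more quantitative and arguably cleaner alternative.

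There is, however, one concrete slip: the good event $E_n$ as you define it (namely $k-l\geq n(2p-1)/2$ and $k+l\geq n/2$) does not support the displayed lower bound on the conditional mean. Since
\begin{equation*}
\mathbb{E}\!\left[V_A-\tfrac{V_{\mathrm{tot}}}{2}\,\middle|\,k,l\right]
=\frac{k-l}{2}+\frac{m(k-l)}{2(k+l)},
\end{equation*}
the term proportional to $m$ requires a lower bound on the ratio $(k-l)/(k+l)$, hence an \emph{upper} bound on $k+l$, which your event does not contain; e.g.\ $k,l$ both of order $n^2$ with $k-l=n(2p-1)/2$ lie in $E_n$ but give a conditional mean of order $n$ only, so for $m\gg n^2$ the Chebyshev ratio no longer vanishes. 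The lower bound $k+l\geq n/2$ you imposed is in fact not what is needed. The fix is immediate and in the spirit of your argument: add $k+l\leq 2n$ to $E_n$ (one more Chebyshev application, costing $O(1/n)$ in probability); on the corrected event $(k-l)/(k+l)\geq (2p-1)/4$, your mean bound, the variance bound $m/4$, and the AM--GM step go through verbatim and the proof closes.
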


\begin{proof} 
See Appendix \ref{app:proofs}.
\end{proof}

A direct corollary of Proposition~\ref{prop: lim_n_to_inf} and Conjecture~\ref{thm3} with capped delegation is that
 $\lim_{n\rightarrow \infty}P_c(n,p,m)=1$ for any fixed $p>0.5$, any $c\in \mathbb{N}$ and any $m$, where $m$ can even depend on $n$.

Next, we show that the probability that $A$ wins with free delegation converges to the probability that $A$ wins with conventional voting, if $m$ goes to infinity.

\begin{proposition}\label{P_lim}
$\lim_{m\rightarrow \infty}P(p,m)=P(p)$ for any $p\in(0,1)$.
\end{proposition}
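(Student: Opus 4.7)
The plan is to prove the pointwise convergence $G(k,l,m) \to g(k,l)$ for every fixed pair $(k,l)$ as $m \to \infty$, and then pass the limit inside the outer sum over $i$ using the dominated convergence theorem. Since the inner sum $\sum_{k=0}^{i}\binom{i}{k}p^k(1-p)^{i-k}G(k,i-k,m)$ is a finite convex combination for each $i$, convergence of $G$ immediately delivers convergence of this inner sum to its counterpart in $P(p)$; only the outer sum requires a justification for interchanging limit and summation.

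For the pointwise convergence, I would split into cases. The case $k = l = 0$ is immediate from the definition, both sides equal $1/2$. When $k \neq l$, say $k > l$ (the reverse case is symmetric), the variable $h$ in the definition of $G(k,l,m)$ is $\mathrm{Bin}(m,k/(k+l))$, and $g(k+h,l+m-h) = 1$ exactly when $h > (m+l-k)/2$. Since $k/(k+l) > 1/2$ while the threshold $(m+l-k)/(2m)$ tends to $1/2$, the weak law of large numbers (or a direct Chebyshev bound, whose details I would omit) gives $\Pr[h > (m+l-k)/2] \to 1$, so $G(k,l,m) \to 1 = g(k,l)$.

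The slightly more delicate case is $k = l > 0$, where $h \sim \mathrm{Bin}(m,1/2)$ is symmetric about $m/2$ and simultaneously $g(k+h,k+m-h) = \mathbf{1}\{2h > m\} + \tfrac{1}{2}\mathbf{1}\{2h = m\}$ is built from events also symmetric about $m/2$. A short symmetry computation, separating the parities of $m$, shows $G(k,k,m) = 1/2 = g(k,k)$ exactly for every $m$, so no asymptotics are needed. Combined with the previous paragraph, this establishes $G(k,l,m) \to g(k,l)$ in all cases.

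To conclude, set $a_i(m) := \sum_{k=0}^{i}\binom{i}{k}p^k(1-p)^{i-k}G(k,i-k,m)$ and $a_i := \sum_{k=0}^{i}\binom{i}{k}p^k(1-p)^{i-k}g(k,i-k)$. The previous step gives $a_i(m)\to a_i$ for each $i$. Since $G \in [0,1]$ and the binomial weights sum to one, $0 \le a_i(m) \le 1$, and the summable bound $F(i)a_i(m) \le F(i)$ with $\sum_i F(i) = 1 < \infty$ lets the dominated convergence theorem be applied termwise, yielding $\lim_{m\to\infty}P(p,m) = \sum_i F(i)a_i = P(p)$. The main obstacle is the $k=l$ case, which is not handled by the LLN argument and instead relies on exploiting the exact symmetry of the binomial to show that $G$ is identically $1/2$ there.
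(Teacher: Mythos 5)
Your proof is correct, and its skeleton coincides with the paper's: first establish $\lim_{m\to\infty}G(k,l,m)=g(k,l)$ pointwise (with the tied case $k=l>0$ handled by the exact parity/symmetry computation giving $G(k,k,m)=\tfrac{1}{2}$ for every $m$, exactly as in the paper), then interchange limit and summation by dominated convergence. The two places where you genuinely diverge are worth noting. First, for $k\neq l$ you invoke the weak law of large numbers (or a Chebyshev bound) for $h\sim Bin\left(m,\tfrac{k}{k+l}\right)$ against the threshold $\tfrac{m+l-k}{2}$, whereas the paper runs a Hoeffding estimate; your route is more elementary and fully adequate for a purely asymptotic statement, while Hoeffding buys explicit exponential rates (do make sure the tie term is absorbed in the case $k<l$, e.g.\ via $G(k,l,m)\le P\left[h\ge \tfrac{m+l-k}{2}\right]$, which your ``symmetric'' remark leaves implicit). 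Second, and more substantively, you apply dominated convergence once, directly to the outer sum over $i$ in the defining expression \eqref{pr_free_del}, exploiting that the inner sum over $k$ is a finite convex combination; the paper instead re-parameterizes $P(p,m)$ as a double sum over the numbers of $A$- and $B$-voters weighted by the product $Q(k,p)\,Q(l,1-p)$ and applies dominated convergence twice. That factorization is the Poisson decomposition property and does not hold for an arbitrary discrete $F$, so your single-DCT version is not only shorter but applies verbatim to the general distribution $F$ for which $P(p,m)$ and the proposition are actually stated.
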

\begin{proof}
See Appendix \ref{app:proofs}.
\end{proof}

Another direct corollary of Proposition~\ref{P_lim} and Conjecture~\ref{thm3} for capped delegation is that $\lim_{m\rightarrow \infty}P_c(p,m)=P(p)$ for any $c\in \mathbb{N}$ and $p\in(0,1)$.


{\section{Weighted Voting}

Let $P_n(p)$ denote the probability that the majority implements the right alternative with $n \in \mathbb{N}$ voters, where each receives the right signal with probability $p$. Let $P_{n,w}(p)$ denotes the probability that the majority implements the right alternative with $n$ voters, where each receives the right signal with probability $p$ and $w:=(w_1,\cdots, w_n)$ denotes the vector of weights of all voters. Each weight $w_i$ is a non-negative real number. We prove the following generic result in the information aggregation setting: 

\begin{theorem}\label{generic}
$P_n(p)\geq P_{n,w}(p)$ for any $p\in[0,1],n\in \mathbb{N}$ and $w = (w_1,...,w_n)$ where $w_i\geq 0$ for each $i=1,...,n$. 
\end{theorem}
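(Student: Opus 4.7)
The plan is to prove the inequality by a complement-pairing argument on signal configurations. Let $S\subseteq\{1,\ldots,n\}$ denote the random set of voters receiving the correct signal, so $\Pr[S]=\pi_S:=p^{|S|}(1-p)^{n-|S|}$. Introduce the contribution of configuration $S$ to unweighted and weighted correct-majority,
\[
U(S):=\mathbf{1}[|S|>n-|S|]+\tfrac{1}{2}\mathbf{1}[|S|=n-|S|],
\]
\[
W(S):=\mathbf{1}\!\Bigl[\textstyle\sum_{i\in S}w_i>\sum_{i\notin S}w_i\Bigr]+\tfrac{1}{2}\mathbf{1}\!\Bigl[\textstyle\sum_{i\in S}w_i=\sum_{i\notin S}w_i\Bigr],
\]
so that $P_n(p)=\sum_S\pi_S U(S)$ and $P_{n,w}(p)=\sum_S\pi_S W(S)$. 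Both indicators satisfy the symmetry $U(S)+U(\bar S)=W(S)+W(\bar S)=1$, where $\bar S$ is the complement.

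The central step is to rewrite $P_n(p)-P_{n,w}(p)=\sum_S\pi_S(U(S)-W(S))$ by grouping subsets into complement-pairs and choosing, from each pair, the representative $S$ with $|S|\geq n/2$. A one-line application of the symmetry above yields
\[
P_n(p)-P_{n,w}(p)=\sum_{\{S,\bar S\}}(\pi_S-\pi_{\bar S})\bigl(U(S)-W(S)\bigr).
\]
In the informative regime $p\geq\tfrac12$, the chosen representative satisfies $\pi_S/\pi_{\bar S}=(p/(1-p))^{|S|-|\bar S|}\geq 1$, so $\pi_S\geq\pi_{\bar S}$. Moreover, $|S|>n/2$ forces $U(S)=1$, and since $W(S)\in\{0,\tfrac12,1\}$ we get $U(S)\geq W(S)$; whereas $|S|=n/2$ makes the prefactor $\pi_S-\pi_{\bar S}$ vanish, so the sign of $U(S)-W(S)$ is irrelevant in that case. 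Hence every summand is nonnegative, giving the claim.

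The main obstacle, which I would want to flag for the authors, is that the theorem is stated for all $p\in[0,1]$, but the inequality fails for $p<\tfrac12$: take $n=3$, $p=0.4$, and $w=(1,0,0)$, where $P_{n,w}(p)=p=0.4$ while $P_n(p)=3p^2(1-p)+p^3=0.352$. The pairing argument requires $\pi_S\geq\pi_{\bar S}$ for the majority representative, which forces $p\geq\tfrac12$. For $p\leq\tfrac12$, the identity $P_n(p)+P_n(1-p)=1$ (and the analogous identity for $P_{n,w}$, using $g(k,l)+g(l,k)=1$) shows that the inequality is simply reversed. The natural corrected statement is therefore $P_n(p)\geq P_{n,w}(p)$ for $p\geq\tfrac12$, which is precisely the informative-signal regime assumed throughout the paper's information-aggregation discussion; the theorem should be read under that restriction.
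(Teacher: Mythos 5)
Your proof is correct, and it is a slightly different (and cleaner) route than the paper's. The paper also works subset-by-subset over the set $S$ of correctly-informed voters, but it argues by an injection of summands: every $S$ with $w(S)>w(N)/2$ is matched either to itself (if $|S|>n/2$) or to its complement $\bar{S}$ (if $|S|<n/2$), whose probability weight $p^{|\bar{S}|}(1-p)^{|S|}$ dominates $p^{|S|}(1-p)^{|\bar{S}|}$ when $p>\tfrac{1}{2}$; the degenerate cases (even $n$, or subsets with $w(S)=w(\bar{S})$) are excluded by assumption and only asserted to be ``similar''. Your complement-pairing identity $P_n(p)-P_{n,w}(p)=\sum_{\{S,\bar{S}\}}(\pi_S-\pi_{\bar{S}})\bigl(U(S)-W(S)\bigr)$ buys two things: it treats all ties uniformly through the $\tfrac{1}{2}$-convention, so even $n$ and degenerate weight vectors need no separate discussion, and it makes the argument structurally parallel to the paper's proof of Theorem~\ref{thm} (complementarity of winning probabilities plus monotonicity of $\pi_S$ in $|S|$ for $p\geq\tfrac{1}{2}$). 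Your flag about the range of $p$ is also well taken: as stated for all $p\in[0,1]$ the theorem is false, and your example $n=3$, $w=(1,0,0)$, $p=0.4$ (giving $P_{n,w}=0.4>0.352=P_n$) is a valid counterexample; the paper's own proof explicitly invokes $p>\tfrac{1}{2}$ in its second case, so the statement should indeed carry the restriction $p\geq\tfrac{1}{2}$, consistent with the informative-signal regime assumed in the paper's information-aggregation setting.
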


This result suggests that the probability of implementing the right achieves maximum when all weights are equal. In fact, from the proof we note that weight vectors where no minority subset outweighs the complementary majority subset maximizes this probability.  

}

\section{Discussion and Conclusion}
We showed that the introduction of vote delegation leads to lower probabilities of winning for the ex-ante majority. Numerical simulations show that free delegation leads to even lower probabilities than capped delegation. However, both delegation processes lead to lower probabilities than conventional voting. These results are particularly important in blockchain governance if one wants to prevent dishonest agents from increasing their probability of winning. Although the setting we analyze in the paper is as simple as possible, we already obtain non-trivial observations. 

In addition to the formal results, we also conjecture that $P(n,p,m)$ is first decreasing and then increasing in $m$, based on numerical evidence.
This conjecture is interesting in the context of delegation as a strategic decision. For the delegators, whether or not they belong to the ex-ante majority, choosing between delegating and abstaining becomes a strategic decision. If we prove only one part of the conjecture, namely that the winning probability is increasing from some point on, we will obtain that in the equilibrium, all majority voters are delegating for a large enough number of delegators $m$. First, note that no one delegating cannot be an equilibrium state, because minority voters prefer to delegate. Once some of them delegate, the probability that the majority of voters is winning is smaller than the one in conventional voting. Using Proposition~\ref{pr_conv}, we obtain that with a large number of delegators, all majority voters delegating is sustainable in the equilibrium. 
We leave this challenging issue to future research.

\subsection*{Acknowledgements}
This research was partially supported by the Zurich Information Security and Privacy Center (ZISC) at ETH Zurich.

\bibliographystyle{apalike}
\bibliography{references}

\begin{thebibliography}{}

\bibitem[Abramowitz and Mattei, 2019]{ijcai2019-1}
Abramowitz, B. and Mattei, N. (2019).
\newblock {Flexible Representative Democracy: An Introduction with Binary
  Issues}.
\newblock In {\em Proceedings of the Twenty-Eighth International Joint
  Conference on Artificial Intelligence, {IJCAI-19}}, pages 3--10.
  International Joint Conferences on Artificial Intelligence Organization.

\bibitem[Bloembergen et~al., 2019]{rational_delegation}
Bloembergen, D., Grossi, D., and Lackner, M. (2019).
\newblock {On Rational Delegations in Liquid Democracy}.
\newblock {\em Proceedings of the AAAI Conference on Artificial Intelligence},
  33(1):1796--1803.

\bibitem[Damg{å}rd et~al., 2020]{concordium}
Damg{å}rd, I., Gersbach, H., Maurer, U., Nielsen, J.~B., Orlandi, C., and
  Pedersen, T.~P. (2020).
\newblock {Concordium White Paper}.

\bibitem[Escoffier et~al., 2019]{escoffier-gilbert}
Escoffier, B., Gilbert, H., and Pass-Lanneau, A. (2019).
\newblock {The Convergence of Iterative Delegations in Liquid Democracy in a
  Social Network}.
\newblock In Fotakis, D. and Markakis, E., editors, {\em Algorithmic Game
  Theory}, pages 284--297, Cham. Springer International Publishing.

\bibitem[G{\"{o}}lz et~al., 2018]{liquid_fluid}
G{\"{o}}lz, P., Kahng, A., Mackenzie, S., and Procaccia, A.~D. (2018).
\newblock {The Fluid Mechanics of Liquid Democracy}.
\newblock In Christodoulou, G. and Harks, T., editors, {\em Web and Internet
  Economics - 14th International Conference, {WINE}, Proceedings}, volume 11316
  of {\em Lecture Notes in Computer Science}, pages 188--202. Springer.

\bibitem[Goodman, 2014]{tezos}
Goodman, L. (2014).
\newblock {Tezos -- Self-amending Crypto-ledger. White Paper}.

\bibitem[Hoeffding, 1963]{Hoeffding}
Hoeffding, W. (1963).
\newblock {Probability Inequalities for Sums of Bounded Random Variables}.
\newblock {\em Journal of the American Statistical Association},
  58(301):13–30.

\bibitem[Kahng et~al., 2018]{liquid_algorithmic}
Kahng, A., Mackenzie, S., and Procaccia, A. (2018).
\newblock {Liquid Democracy: An Algorithmic Perspective}.
\newblock {\em Proceedings of the AAAI Conference on Artificial Intelligence},
  32(1):1095--1102.

\bibitem[Kotsialou and Riley, 2020]{kotsialou-riley}
Kotsialou, G. and Riley, L. (2020).
\newblock {Incentivising Participation in Liquid Democracy with Breadth-First
  Delegation}.
\newblock {\em Proceedings of the 19th International Conference on Autonomous
  Agents and MultiAgent Systems}, page 638–644.

\bibitem[Mitzenmacher and Upfal, 2005]{Mitzenmacher}
Mitzenmacher, M. and Upfal, E. (2005).
\newblock {\em {Probability and Computing: Randomized Algorithms and
  Probabilistic Analysis}}.
\newblock Cambridge University Press.

\bibitem[Myerson, 1998]{poisson}
Myerson, R. (1998).
\newblock {Population Uncertainty and Poisson Games}.
\newblock {\em International Journal of Game Theory}, 27:375--392.

\bibitem[Pivato and Soh, 2020]{PIVATO202052}
Pivato, M. and Soh, A. (2020).
\newblock {Weighted Representative Democracy}.
\newblock {\em Journal of Mathematical Economics}, 88:52--63.

\bibitem[Soh, 2020]{SohVoutsa}
Soh, A. (2020).
\newblock {Approval Voting \& Majority Judgment in Weighted Representative
  Democracy}.
\newblock {\em THEMA Working Paper 2020-15}, CY Cergy Paris Université,
  France.

\end{thebibliography}

\clearpage

\appendix
\section{Proofs} \label{app:proofs}

Before proving Theorem~\ref{thm}, we show the following lemma:
\begin{lemma}\label{lemma_G_sum_1}
For any $l,t,m \in \mathbb{N}$ we have $G(l+t,l,m) + G(l,l+t,m)=1$.
\end{lemma}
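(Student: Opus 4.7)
The plan is to unfold the definition of $G(k,l,m)$ on both sides, align the two binomial sums by a reindexing, and then exploit the obvious symmetry $g(a,b)+g(b,a)=1$.

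First I would write out
\[
G(l+t,l,m)=\sum_{h=0}^{m}\binom{m}{h}\left(\tfrac{l+t}{2l+t}\right)^{h}\left(\tfrac{l}{2l+t}\right)^{m-h}g(l+t+h,\,l+m-h),
\]
\[
G(l,l+t,m)=\sum_{h=0}^{m}\binom{m}{h}\left(\tfrac{l}{2l+t}\right)^{h}\left(\tfrac{l+t}{2l+t}\right)^{m-h}g(l+h,\,l+t+m-h).
\]
In the second sum I would reindex $h\mapsto m-h$. This swaps the roles of $(l+t)/(2l+t)$ and $l/(2l+t)$ in the binomial weights, so that the reindexed sum has exactly the same weight $\binom{m}{h}\bigl(\tfrac{l+t}{2l+t}\bigr)^{h}\bigl(\tfrac{l}{2l+t}\bigr)^{m-h}$ as the first sum, only with $g$-argument $g(l+m-h,\,l+t+h)$.

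Adding the two sums term by term, the $g$-contribution at index $h$ is
\[
g(l+t+h,\,l+m-h)+g(l+m-h,\,l+t+h)=1,
\]
since by definition $g(a,b)+g(b,a)=1$ for any $a,b\in\mathbb{N}$ (both cases $a\neq b$ and $a=b$ are immediate). What remains is the identity
\[
\sum_{h=0}^{m}\binom{m}{h}\left(\tfrac{l+t}{2l+t}\right)^{h}\left(\tfrac{l}{2l+t}\right)^{m-h}=\left(\tfrac{l+t}{2l+t}+\tfrac{l}{2l+t}\right)^{m}=1,
\]
by the binomial theorem, which yields the claim.

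The only delicate points are boundary cases in the definition of $G$. If $l=t=0$, the ``else'' branch does not apply and one uses $G(0,0,m)=\tfrac12$ directly, so the sum is $\tfrac12+\tfrac12=1$. If $l=0$ and $t>0$ (or symmetrically), the factor $(l/(2l+t))^{m-h}$ forces only the term $h=m$ to survive in $G(t,0,m)$, giving $G(t,0,m)=g(t+m,0)=1$, while $G(0,t,m)=g(0,t+m)=0$, and the sum is again $1$. I do not expect any real obstacle; the argument is essentially a reindexing plus the two elementary identities $g(a,b)+g(b,a)=1$ and $\sum_{h}\binom{m}{h}q^{h}(1-q)^{m-h}=1$.
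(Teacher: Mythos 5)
Your proof is correct, but it takes a more computational route than the paper. The paper's argument is purely conceptual: it introduces $H(k,l,m)$, the probability that $B$ wins, and writes $G(l+t,l,m)=H(l,l+t,m)=1-G(l,l+t,m)$, the first equality by the $A$/$B$ symmetry of the delegation process and the second because the events ``$A$ wins'' and ``$B$ wins'' are complementary (ties being split $\tfrac12$--$\tfrac12$ by $g$). Your reindexing $h\mapsto m-h$ is exactly the algebraic shadow of that symmetry, and the pairing $g(a,b)+g(b,a)=1$ together with the binomial theorem is the complementarity made explicit; so the underlying idea is the same, but you verify it directly on the defining formula rather than asserting it probabilistically. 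What your version buys is that the definitional edge cases are handled explicitly -- the separate branch $G(0,0,m)=\tfrac12$ when $l=t=0$, and the degenerate weights when $l=0$, $t>0$, where only $h=m$ (resp.\ $h=0$) survives -- points the paper's two-line argument glosses over. What the paper's version buys is brevity and reusability: the identical symmetry-plus-complementarity argument is repeated verbatim for the capped quantity $G_c$ in Lemma~\ref{lemma_G_cap_1}, whereas your calculation would have to be redone with the $\min\{\cdot,ck\}$, $\min\{\cdot,cl\}$ truncations (it would still go through, since the truncation is applied symmetrically, but it is extra work). Either way, your argument is complete and sound.
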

\begin{proof}[Proof of Lemma~\ref{lemma_G_sum_1}]

Similarly to the definition of $G$, let $H(k,l,m)$ denote the probability that $B$ wins if $A$ voters have $k$ votes, $B$ voters have $l$ votes and $m$ voters are delegating. Then, 

\begin{equation}
    G(l+t, l,m) = H(l, l+t, m) = 1 - G(l, l+t, m),
\end{equation}

where the first equality follows from the symmetry, and the second equality follows from the fact that events of $A$ and $B$ winning are complementary. 
\end{proof}

With this Lemma we proceed to prove Theorem~\ref{thm}.

\begin{proof}[Proof of Theorem~\ref{thm}]
To prove that $P(p,m)$ is smaller than $P(p)$ for $m>0$ and $p>0.5$, we take a closer look at the summands in $P(p,m)$ and $P(p)$, in~\eqref{pr_cap_del} and~\eqref{pr_conv}, respectively. We prove the inequality for each $i$. We use two properties. 

First, from the Lemma~\ref{lemma_G_sum_1} we know that $G(k, i - k, m) + G(i - k, k ,m) = 1$. Second, 

\begin{equation}\label{binomial_ineq}
    P[Bin(i,p) = k] \geq P[Bin(i,p) = i - k]
\end{equation} for any $k\geq i-k$ and $p\geq \frac{1}{2}$. Since ${i \choose k}={i\choose i-k}$, the condition is equivalent to $p^k(1-p)^{i-k}\geq p^{i-k}(1-p)^k$, which is equivalent to $\frac{p}{1-p}^{2k-i}\geq 1$. For $k\geq i-k$, we have $2k-i\geq 0$. For $p\geq \frac{1}{2}$ the condition holds. 

The difference $P(p)-P(p,m)$ is equal to 

\begin{equation}
    \sum_{i=0}^{\infty}F(i)\underbrace{\sum_{k=0}^{i}{i \choose k}p^{k}(1-p)^{i-k}(g(k,i-k)-G(k,i-k,m))}_{\Delta_i}.
\end{equation}
When $k<i-k$, we have that $g(k,i-k)-G(k,i-k,m)=-G(k,i-k,m)$. When $k>i-k$, we have $g(k,i-k)-G(k,i-k,m) = G(i-k, k,m)$. When $k = i-k$ we have that $g(k,i-k)-G(k,i-k,m) = 0$. We split $\Delta_i$ into two sums. One, where $k>i-k$ and one where $k<i-k$.
\begin{align*}
    \Delta_i &= \sum_{k=0}^{i}{i \choose k}p^{k}(1-p)^{i-k}(g(k,i-k)-G(k,i-k,m)) \\
    &= \sum_{k = \lceil \frac{i+1}{2} \rceil}^i {i \choose k}p^{k}(1-p)^{i-k}G(i-k,k,m) - \sum_{k = 0}^{\lfloor \frac{i-1}{2} \rfloor} {i \choose k} p^{k}(1-p)^{i-k}G(k,i-k,m)\\
    & \overset{(*)}{=} \sum_{k = \lceil \frac{i+1}{2} \rceil}^i {i \choose k}p^{k}(1-p)^{i-k}G(i-k,k,m) - \sum_{k = \lceil \frac{i+1}{2} \rceil}^i {i \choose i-k} p^{i-k}(1-p)^{k}G(i-k,k,m)\\
    &= \sum_{k = \lceil \frac{i+1}{2} \rceil}^i \underbrace{\bigg[{i \choose k}p^{k}(1-p)^{i-k} - {i \choose i-k} p^{i-k}(1-p)^{k} \bigg]}_{>0 \text{ by \eqref{binomial_ineq}}} G(i-k,k,m)\\
    &> 0.
\end{align*}
In step $(*)$ we made a change of variables in the second sum from $k \to i-k$.
Hence, $P(p)-P(p,m)>0$. This finishes the proof of the theorem.
\end{proof}

Next, we prove the following two lemmata:

\begin{lemma}\label{lemma_G_cap_1} 
For $c\geq 2$, $l,m \in \mathbb{N}$ and $t \in \{1,...,m\}$:
    \begin{equation*}
        G_c(l+t,l,m) + G_c(l,l+t,m) = 1.
    \end{equation*}
\end{lemma}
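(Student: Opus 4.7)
The plan is to mimic the proof of Lemma~\ref{lemma_G_sum_1}, but at a finer level: rather than invoking an auxiliary "$H_c$" symbol, I would directly manipulate the sums defining $G_c(l+t,l,m)$ and $G_c(l,l+t,m)$ so that both reduce to the same binomial expansion, and then use a symmetry property of $g$.

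First, since $t \geq 1$ we have $l+t \geq 1$, so in both cases $(k,l) = (l+t,l)$ and $(k,l) = (l,l+t)$ the second branch of the definition of $G_c$ applies (never the $k=l=0$ branch). Next, in the sum defining $G_c(l,l+t,m)$, I would perform the reindexing $h \mapsto m-h$. Because $\binom{m}{m-h} = \binom{m}{h}$ and the two probability factors $\left(\frac{l}{2l+t}\right)^{m-h}\left(\frac{l+t}{2l+t}\right)^{h}$ get interchanged, the resulting summand carries exactly the weight $\binom{m}{h}\left(\frac{l+t}{2l+t}\right)^h\left(\frac{l}{2l+t}\right)^{m-h}$ that appears in $G_c(l+t,l,m)$, but now the $g$-argument becomes
\[
g\bigl(\min\{l+m-h,cl\},\,\min\{l+t+h,c(l+t)\}\bigr),
\]
which is the pair inside $g$ in $G_c(l+t,l,m)$ with its two entries swapped.

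The second ingredient is the elementary identity $g(a,b)+g(b,a)=1$ for all nonnegative integers $a,b$, which is immediate from checking the three cases $a>b$, $a=b$, $a<b$ in the definition of $g$. Applying this to each pair $(a,b)=(\min\{l+t+h,c(l+t)\},\min\{l+m-h,cl\})$ and summing, I obtain
\[
G_c(l+t,l,m)+G_c(l,l+t,m)=\sum_{h=0}^{m}\binom{m}{h}\left(\tfrac{l+t}{2l+t}\right)^h\left(\tfrac{l}{2l+t}\right)^{m-h}=\left(\tfrac{l+t}{2l+t}+\tfrac{l}{2l+t}\right)^{m}=1,
\]
which proves the lemma.

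I do not expect a genuine obstacle; the only subtlety worth double-checking is that the $\min$-clipping in the two $g$-arguments is correctly interchanged under the substitution $h\mapsto m-h$, so that the identity $g(a,b)+g(b,a)=1$ applies termwise. The hypotheses $c\geq 2$ and $t\leq m$ are not actually used by the argument; they presumably anticipate later uses of the lemma where they become relevant.
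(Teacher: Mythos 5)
Your proof is correct, and it rests on the same underlying symmetry as the paper's, but it executes it differently: the paper introduces an auxiliary quantity $H_c(k,l,m)$ (the probability that $B$ wins) and asserts $G_c(l+t,l,m)=H_c(l,l+t,m)=1-G_c(l,l+t,m)$ by appealing to symmetry and complementarity of the winning events, whereas you verify the identity directly on the defining sum via the reindexing $h\mapsto m-h$, the matching of the binomial weights, and the termwise identity $g(a,b)+g(b,a)=1$ (which covers the tie case because $g(a,a)=\tfrac{1}{2}$). Your route is more self-contained and arguably more rigorous relative to the formula actually defining $G_c$, since it does not require interpreting $G_c$ through the underlying random experiment, and it makes transparent that the $\min$-clipping terms $\min\{l+t+h,c(l+t)\}$ and $\min\{l+m-h,cl\}$ are exactly interchanged under the substitution; the paper's version is shorter and reuses verbatim the argument of its earlier uncapped lemma. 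Your closing observation is also accurate: the hypotheses $c\geq 2$ and $t\leq m$ play no role in the identity itself (for $c=1$ the two terms are simply $1$ and $0$), and they matter only for the later statements where the capped case is compared with conventional voting.
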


\begin{proof}[Proof of Lemma \ref{lemma_G_cap_1}]

Similarly to the definition of $G_c$, let $H_c(k,l,m)$ denote the probability that $B$ wins if $A$ voters have $k$ votes, $B$ voters have $l$ votes, $m$ voters are delegating and the cap is equal to $c$. Then, 

\begin{equation}
    G_c(l+t, l,m) = H_c(l, l+t, m) = 1 - G_c(l, l+t, m),
\end{equation}

where the first equality follows from the symmetry, and the second equality follows from the fact that events of $A$ and $B$ winning are complementary. 
\end{proof}

\begin{lemma}\label{symmetric_GC}
$G_c(l,l,m)=\frac{1}{2}$ for any $c,l,m\in \mathbb{N}$ .
\end{lemma}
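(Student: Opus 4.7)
The plan is to exploit the $A$–$B$ symmetry that is present precisely when $k = l$. First I would specialize the definition of $G_c$ to $k = l$: the two delegation probabilities $k/(k+l)$ and $l/(k+l)$ both collapse to $1/2$, and the two caps $ck$ and $cl$ coincide. Writing $a_h := \min\{l+h,\, cl\}$ and $b_h := \min\{l+m-h,\, cl\}$, the expression reduces to
$$G_c(l,l,m) \;=\; \sum_{h=0}^{m}\binom{m}{h}\left(\tfrac{1}{2}\right)^{m} g(a_h, b_h).$$

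Next, I would apply the index substitution $h \mapsto m-h$. Because $\binom{m}{h}$ and $(1/2)^m$ are both invariant under this substitution, and because the substitution swaps $a_h$ with $b_h$, it yields a second representation of the same quantity with $g(b_h,a_h)$ in place of $g(a_h,b_h)$. Averaging the two representations gives
$$2\,G_c(l,l,m) \;=\; \sum_{h=0}^{m}\binom{m}{h}\left(\tfrac{1}{2}\right)^{m}\bigl[g(a_h,b_h)+g(b_h,a_h)\bigr].$$

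The last step will be the elementary observation that $g(x,y)+g(y,x)=1$ for every $x,y\in\mathbb{N}$ (a direct inspection of the three cases $x>y$, $x=y$, $x<y$ in the definition of $g$). Inserting this collapses the bracket to the constant $1$, and the remaining sum is just the total mass of a $\mathrm{Bin}(m,1/2)$ distribution, i.e.\ $1$. Dividing by $2$ yields $G_c(l,l,m)=\tfrac{1}{2}$, as desired.

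I do not expect a real obstacle here. The only point that requires a moment of care is that when $k = l$ the cap truncation is symmetric — both sides share the common upper bound $cl$ — so no asymmetric clipping can break the pairing $h \leftrightarrow m-h$. Conceptually this is the same pairing argument that underlies Lemma~\ref{lemma_G_cap_1}, specialized to the tie case where $A$ and $B$ are interchangeable.
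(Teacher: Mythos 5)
Your argument is correct and is essentially the paper's proof made explicit: the paper simply states that the lemma ``follows immediately from the symmetry,'' and your pairing $h \leftrightarrow m-h$ together with $g(x,y)+g(y,x)=1$ is exactly that symmetry written out. The only edge case, $k=l=0$, falls outside your reduction (the weights $k/(k+l)$ are undefined there) but is covered directly by the definition $G_c(0,0,m)=\tfrac{1}{2}$, so nothing is missing.
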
 

\begin{proof}[Proof of Lemma \ref{symmetric_GC}]

Proof follows immediately from the symmetry. 
\end{proof}

Next, we prove Theorem~\ref{thm2}.

\begin{proof}[Proof of Theorem~\ref{thm2}]
As for the proof of Theorem~\ref{thm}, the key lies in Lemma~\ref{lemma_G_cap_1}. We proceed as in the proof of Theorem~\ref{thm}.
We use two properties. 

First, from Lemma~\ref{lemma_G_cap_1} we know that $G_c(k, i - k, m) + G_c(i - k, k ,m) = 1$. Second, 

\begin{equation}\label{binomial_ineq2}
    P[Bin(i,p) = k] \geq P[Bin(i,p) = i - k]
\end{equation} for any $k\geq i-k$ and $p\geq \frac{1}{2}$. Since ${i \choose k}={i\choose i-k}$, the condition is equivalent to $p^k(1-p)^{i-k}\geq p^{i-k}(1-p)^k$, which is equivalent to $\frac{p}{1-p}^{2k-i}\geq 1$. For $k\geq i-k$, we have $2k-i\geq 0$. For $p\geq \frac{1}{2}$ the condition holds. 

The difference $P(p)-P_c(p,m)$ is equal to 

\begin{equation}
    \sum_{i=0}^{\infty}F(i)\sum_{k=0}^{i}{i \choose k}p^{k}(1-p)^{i-k}(g(k,i-k)-G_c(k,i-k,m)).
\end{equation}

When $k<i-k$, we have that $g(k,i-k)-G_c(k,i-k,m)=-G_c(k,i-k,m)$. Similarly, we have $g(i-k,k)-G_c(i-k,k,m) = G_c(k, i-k,m)$. On the other hand, from~\eqref{binomial_ineq2} we know that the coefficient with $-G_c(k,i-k,m)$ is smaller than the coefficient with $G_c(k,i-k,m)$. Therefore, $P(p)-P_c(p,m) > 0$. This finishes the proof of the theorem. 
\end{proof}

\begin{proof}[Proof of Proposition \ref{prop: lim_p_to_1}]

We can verify that $P(1,m)=1$. 
\begin{equation*}
    P(1,m) = \sum_{i=0}^\infty F(i) \underbrace{g(i+m,0)}_{=1} = 1.
\end{equation*}
Indeed, only the terms where $k=i$ and $h=m$ survive.
Since the function $P(p,m)$ is uniformly continuous in $p$, the proposition is proved.
\end{proof}

\begin{proof}[Proof of Proposition \ref{prop: lim_n_to_inf}]

Let $p>0.5$ and $\epsilon \in (0,p-\frac{1}{2})$. Define two Poisson random variables $K$, with parameter $np$, and $L$, with parameter $n(1-p)$.
The probability that $A$-voters have at least $n(\frac{1}{2}+\epsilon)$ votes can be bounded by using the Poisson random variable concentration inequalities from ~\cite{Mitzenmacher} which say that for a Poisson random variable $X$ with parameter $\lambda$: 

If $x < \lambda$, then
\begin{equation}
\label{low_concentration}
P(X\leq x)\leq \frac{e^{-\lambda}(e\lambda)^x}{x^x},
\end{equation}
and if  $x > \lambda$, then
\begin{equation}
\label{high_concentration}
P(X\geq x)\leq \frac{e^{-\lambda}(e\lambda)^x}{x^x}.
\end{equation}

Using~\eqref{low_concentration} for $n(\frac{1}{2}+\epsilon)$ and $np$, we obtain:

\begin{equation*}
    P[K<n(\frac{1}{2}+\epsilon)] \leq \frac{(enp)^{n(\frac{1}{2}+\epsilon)}}{e^{np} (n(\frac{1}{2}+\epsilon))^{n(\frac{1}{2}+\epsilon)}} = \left( \frac{\left(\frac{p}{1+\epsilon}\right)^{\frac{1}{2}+\epsilon}}{e^{p-\frac{1}{2}-\epsilon}}\right)^n \overset{n \to \infty}{\longrightarrow} 0.
\end{equation*}

The last implication holds if we show that $$\left(\frac{p}{q}\right)^q<e^{p-q},$$

where $q:=\frac{1}{2}+\epsilon$. Consider $f(p)=e^{p-q}-(p/q)^q$. Then, $f(q)=0$ and $f$ is increasing in $p$. The latter holds because $\frac{\partial f(p)}{\partial p}=e^{p-q}-(p/q)^{q-1}>0$ for any $p>q$, since $e^{p-q}>1$ and $(p/q)^{q-1}<1$. 

Hence, the probability that $A$-voters have at least $n(\frac{1}{2}+\epsilon)$ votes is
\begin{equation*}
    P[K\geq n(\frac{1}{2}+\epsilon)] = 1-P[K<n(\frac{1}{2}+\epsilon)] \overset{n \to \infty}{\longrightarrow} 1.
\end{equation*}

At the same time we apply~\eqref{high_concentration} for $n(\frac{1}{2}-\epsilon)$ and $n(1-p)$, we obtain:
\begin{equation}\label{eq: l>n(0.5-eps)}
    P[L>n(\frac{1}{2}-\epsilon)] \leq \frac{(en(1-p))^{n(\frac{1}{2}-\epsilon)}}{e^{n(1-p)} (n(\frac{1}{2}-\epsilon))^{n(\frac{1}{2}-\epsilon)}} \overset{n \to \infty}{\longrightarrow} 0.
\end{equation}

The last implication follows from the following: We can rewrite the fraction on the right-hand side as follows
\begin{equation*}
     \frac{e^{n(\frac{1}{2}-\epsilon)(1+\log(n(1-p)))}}{e^{n(\frac{1}{2}-\epsilon) (\frac{(1-p)}{(\frac{1}{2}-\epsilon)} + \log(n(\frac{1}{2}-\epsilon)))}}   = \exp\big( n(\frac{1}{2}-\epsilon) \underbrace{\big[ 1+\log(n(1-p)) - \frac{1-p}{\frac{1}{2}-\epsilon} - \log(n(\frac{1}{2}-\epsilon)) \big]}_{=:g} \big).
\end{equation*}
Note that $g$ is independent of $n$ as we can write
\begin{equation*}
    g = 1 - \frac{1-p}{\frac{1}{2}-\epsilon} + \log(\frac{1-p}{\frac{1}{2}-\epsilon}).
\end{equation*}
By assumption on $p$ and $\epsilon$ we have that 
\begin{equation*}
     \frac{1-p}{\frac{1}{2}-\epsilon} < 1.
\end{equation*}
We consider the following function for any $y \in (0,1)$,
\begin{equation*}
    f(y):= 1 -y+ \log(y).
\end{equation*}
Then $f(\frac{1-p}{\frac{1}{2}-\epsilon}) = g$.
Function $f$ has the following properties: First, $\lim_{y \to 0} f(y) = -\infty$ and $\lim_{y\to 1} f(y) = 0$. Second, the derivative $f'(y) = -1 + \frac{1}{y} > 0$ since $y<1$. Hence, $f(y)$ is negative for any $y<1$. This implies that $g<0$ and hence the right-hand side of Equation \eqref{eq: l>n(0.5-eps)} is $\exp(n(\frac{1}{2}-\epsilon)g)$ and converges to $0$ for $n \to \infty$ since $g$ is negative.




Hence, the probability that $B$-voters have at most $n(\frac{1}{2}-\epsilon)$ votes is
\begin{equation*}
    P[L\leq n(\frac{1}{2}-\epsilon)] = 1-P[L > n(\frac{1}{2}-\epsilon)] \overset{n \to \infty}{\longrightarrow} 1.
\end{equation*}

These two results and the fact that $K, L$ are independent yield: 
\begin{equation*}
    P[K\geq n(\frac{1}{2}+\epsilon) \text{ and } L\leq n(\frac{1}{2}-\epsilon)] = \underbrace{P[K\geq n(\frac{1}{2}+\epsilon)]}_{\overset{n \to \infty}{\longrightarrow} 1} \underbrace{P[L\leq n(\frac{1}{2}-\epsilon)]}_{\overset{n \to \infty}{\longrightarrow} 1} \overset{n \to \infty}{\longrightarrow} 1.
\end{equation*}
It follows that for $m< 2\epsilon n$, $A$-voters are in the majority with probability $1$, if $n$ goes to infinity.

Next, we consider $m>2\epsilon n$.
From above, we know that for high enough $n$, with probability $1$ $K\geq n(\frac{1}{2}+\epsilon)$ and $L \leq n(\frac{1}{2}-\epsilon)$. Hence, 
\begin{align*}
    \frac{K}{L} \geq \frac{\frac{1}{2}+\epsilon}{\frac{1}{2}-\epsilon} > 1.
\end{align*}

We define the i.i.d. random variables:

\begin{equation*}
    X_i : = \begin{cases}
+1 & \text{w.p. } \frac{1}{2}+\epsilon \\
-1 & \text{w.p. } \frac{1}{2}-\epsilon
\end{cases}
\end{equation*}
for  $i=1,\cdots,m$. The surplus of delegated votes to party $A$-voters is the sum of the $X_i$. We know that $E[X_i] = 2 \epsilon >0$ and by the law of large numbers, $\frac{1}{m}\sum_{i=1}^m X_i \overset{m \to \infty}{\longrightarrow} 2\epsilon >0$. As $m>2\epsilon n$, $n \to \infty$ implies $m \to \infty$.
\end{proof}



\begin{proof} [Proof of Proposition \ref{P_lim}]
First, using Hoeffding's inequality\footnote{See \cite{Hoeffding}.}, we can show that $\lim_{m \to \infty} G(k,l,m)=g(k,l)$ for any fixed $k$ and $l$.
\begin{itemize}
    \item Consider fixed pair $k,l$, so that $k=l\neq 0$. Then, with $q := \frac{k}{k+l}=\frac{1}{2}$, we have
\begin{align*}
    G(k,l,m) &= \sum_{h = \lfloor \frac{m}{2}+1 \rfloor}^m \binom{m}{h} q^h (1-q)^{m-h} + \frac{1}{2} \binom{m}{\frac{m}{2}} q^{\frac{m}{2}} (1-q)^{\frac{m}{2}}\\
    &= \left(\frac{1}{2}\right)^m \sum_{h = \lfloor \frac{m}{2}+1 \rfloor}^m \binom{m}{h} + \left(\frac{1}{2}\right)^{m+1} \binom{m}{\frac{m}{2}}.
\end{align*}
Note that the latter summand is zero if $m$ is odd.
\begin{itemize}
    \item If $m$ is odd: 
    \begin{align*}
         G(k,l,m) = \left(\frac{1}{2}\right)^m \sum_{h = \lfloor \frac{m}{2}+1 \rfloor}^m \binom{m}{h}= \left(\frac{1}{2}\right)^m \frac{2^m}{2} = \frac{1}{2}.
    \end{align*}
    \item If $m$ is even: 
        \begin{align*}
         G(k,l,m) = \left(\frac{1}{2}\right)^m \left[ \sum_{h = \lfloor \frac{m}{2}+1 \rfloor}^m \binom{m}{h} + \left(\frac{1}{2}\right) \binom{m}{\frac{m}{2}} \right]= \left(\frac{1}{2}\right)^m \frac{2^m}{2} = \frac{1}{2}.
    \end{align*}
\end{itemize}
Hence, for $k=l$, we have $G(k,l,m) = g(k,l)=\frac{1}{2}$. If $k=l=0$, then by definition $G(0,0,m)=g(0,0)=\frac{1}{2}$.

\item Consider fixed pair $k,l$, so that $k<l$. Then, with $q := \frac{k}{k+l}<\frac{1}{2}$, we have
\begin{align*}
    G(k,l,m) &= \sum_{h = \lfloor \frac{l-k+m}{2}+1 \rfloor}^m \binom{m}{h} q^h (1-q)^{m-h} + \frac{1}{2} \binom{m}{\frac{l-k+m}{2}} q^{\frac{l-k+m}{2}} (1-q)^{m-\frac{l-k+m}{2}}.
\end{align*}
Define $a:= \frac{l-k+m}{2}$. The first summand above is equal to the probability $P[X > a]$ for a random variable $X \sim Bin(m,q)$. We consider $P[X\geq a]$, which is $G(k,l,m)$ plus a non-negative term (the second term of $G(k,l,m)$) and show that this converges to $0$ and hence also $G(k,l,m)$ converges to $0$.

Hoeffding's inequality  yields:
\begin{equation*}
    P[X \geq m (q+\epsilon)] \leq \exp(-2\epsilon^2 m).
\end{equation*}
We calculate $\epsilon$ by solving $m(q+\epsilon) = a$:
\begin{align*}
 \epsilon = \frac{l-k+m}{2m}-q.
\end{align*}

Let $t:=l-k$, then by Hoeffding's inequality:
\begin{align*}
    P[X \geq a] &\leq \exp(-2 (\frac{l-k+m}{2m}-q)^2 m)\\
    &= \exp(-m(2q^2-2q+\frac{1}{2}) -t-\frac{t^2}{2m}+2qt).
\end{align*}
The right-hand side goes to $0$ for $m \to \infty$ if $2q^2-2q+\frac{1}{2}>0$. This inequality is true for any $q \neq \frac{1}{2}$. As $q<\frac{1}{2}$, the RHS goes to $0$ for  $m \to \infty$. That is,
\begin{equation*}
    \lim_{ m \to \infty } P[X \geq a] = 0.
\end{equation*}
Hence, for $k<l$ we have $\lim_{m \to \infty}G(k,l,m) = g(k,l)= 0$.

\item Consider fixed pair $k,l$, so that $k>l$. Then, with $q := \frac{k}{k+l}>\frac{1}{2}$ we have
\begin{align*}
    G(k,l,m) &= \sum_{h = \lfloor \frac{l-k+m}{2}+1 \rfloor}^m \binom{m}{h} q^h (1-q)^{m-h} + \frac{1}{2} \binom{m}{\frac{l-k+m}{2}} q^{\frac{l-k+m}{2}} (1-q)^{m-\frac{l-k+m}{2}}.
\end{align*}
Define $a:= \frac{l-k+m}{2}$. The first summand above is equal to the probability $P[X > a]$ for a random variable $X \sim Bin(m,q)$. As before, we consider the value $P[X \geq a]$.

Hoeffding's inequality yields:
\begin{equation*}
    P[X < m (q-\epsilon)] \leq \exp(-2\epsilon^2 m).
\end{equation*}
We calculate $\epsilon$ by solving $m(q-\epsilon) = a$:
\begin{align*}
 \epsilon = q - \frac{l-k+m}{2m}.
\end{align*}

Let $t:=|l-k|$, then, by Hoeffding's inequality:
\begin{align*}
    P[X < a] &\leq \exp(-2 (q-\frac{l-k+m}{2m})^2 m)\\
    &= \exp(-m(2q^2-2q+\frac{1}{2}) -t-\frac{t^2}{2m}+2qt).
\end{align*}
The RHS of the latter converges to $0$ for $m \to \infty$ if $2q^2-2q+\frac{1}{2}>0$. This inequality is true for any $q \neq \frac{1}{2}$. As $q>\frac{1}{2}$, the RHS converges to $0$ for  $m \to \infty$. That is,
\begin{equation*}
    \lim_{ m \to \infty } P[X < a] = 0.
\end{equation*}
Therefore, 
\begin{equation*}
    \lim_{ m \to \infty } P[X \geq a] = \lim_{ m \to \infty } (1-P[X < a]) = 1.
\end{equation*}
Hence, for $k>l$, we have $\lim_{m \to \infty} G(k,l,m) = g(k,l)= 1$.
\end{itemize}
It follows that $\lim_{m \to \infty} G(k,l,m)=g(k,l)$ for any fixed $k$ and $l$.\\\\
In the next step, we want to show that $\lim_{m \to \infty} P(p,m)=P(p)$.

{Let $Q(k,p)$ denote the probability that $k$ voters are $X$-voters, with $X\in \{A,B\}$ and each voter being $X$-voter with probability $p$.
Then, 
\begin{equation*}
    Q(k,p) = \sum_{i = k}^\infty F(i) \binom{i}{k} p^{k} (1-p)^{i-k}.
\end{equation*}
If $F$ is a Poisson distribution with parameter n, then 
\begin{align*}
    Q(k,p) &= \sum_{i=k}^\infty \frac{n^i}{e^n i!} \binom{i}{k}p^k(1-p)^{i-k} = \frac{(np)^k}{e^{np}k!}\sum_{i=0}^\infty \frac{(n(1-p))^{i-k} k!}{e^{n(1-p)}i!} \binom{i}{k} \\ &= \frac{(np)^k}{e^{np}k!} \sum_{i=0}^\infty \frac{(n(1-p))^{i-k}}{e^{n(1-p)}(i-k)!} = \frac{(np)^k}{e^{np}k!}.
\end{align*}
}

Define, 
\begin{equation*}
    a_{k}(m) := \sum_{l=0}^{\infty} Q(k,p) Q(l,1-p) G(k,l,m).
\end{equation*}
Then, 
\begin{equation*}
    P(p,m) = \sum_{k=0}^{\infty} a_{k}(m).
\end{equation*}
Note that, as $G(k,l,m)\leq 1$,
\begin{equation*}
    |a_k(m)| \leq Q(k,p) \sum_{l=0}^{\infty} Q(l,1-p) = Q(k,p) =: M_k,
\end{equation*}
and $\sum_{k=0}^{\infty} M_k =1 < \infty$. Then, by dominated convergence theorem,
\begin{align*}
    \lim_{m \to \infty} P(p,m) &= \sum_{k=0}^{\infty} \lim_{m \to \infty} a_{k}(m)\\
    &= \sum_{k=0}^{\infty} Q(k,p) \lim_{m \to \infty} \sum_{l=0}^{\infty} Q(l,1-p) G(k,l,m).
\end{align*}
Define 
\begin{equation*}
    b_l(m) := Q(l,1-p) G(k,l,m).
\end{equation*}
Note that
\begin{equation*}
    |b_l(m)| \leq Q(l,1-p) = : M_l
\end{equation*}
and $\sum_{l=0}^{\infty} M_l= 1 < \infty$.
Hence, by dominated convergence theorem,
\begin{align*}
    \lim_{m \to \infty} \sum_{l=0}^{\infty} b_l(m) &= \sum_{l=0}^{\infty}\lim_{m \to \infty} b_l(m) \\
    &= \sum_{l=0}^{\infty} Q(l,1-p) \lim_{m \to \infty} G(k,l,m).
\end{align*}
Taken together, by applying dominated convergence theorem twice, we obtain:
\begin{align*}
    \lim_{m \to \infty} P(p,m) &= \sum_{k=0}^{\infty} \lim_{m \to \infty} a_{k}(m)\\
    &= \sum_{k=0}^{\infty} Q(k,p) \lim_{m \to \infty} \sum_{l=0}^{\infty} b_l(m)\\
    &= \sum_{k=0}^{\infty} Q(k,p) \sum_{l=0}^{\infty} \lim_{m \to \infty} b_l(m)\\
    &= \sum_{k=0}^{\infty} Q(k,p) \sum_{l=0}^{\infty} Q(l,1-p) \lim_{m \to \infty} G(k,l,m)\\
    &= \sum_{k=0}^{\infty} Q(k,p) \sum_{l=0}^{\infty} Q(l,1-p) g(k,l) = P(p).
\end{align*}
\end{proof}


{
\begin{proof}[Proof of Theorem~\ref{generic}]
    Let $N = \{1,...,n\}$ be the set of $n\in \mathbb{N}$ voters.
    First we assume that $n$ is an odd number. Let $w(S)$ denote the sum of weights in the set $S\subseteq N$, that is $w(S):=\sum_{i\in S}w_i$ and assume that no subset $S \subseteq N$ has the weight equal to the weight of the complementing subset $\bar{S} :=N\setminus S$, that is, $\forall S \subseteq N, w(S) \neq w(\bar{S})$. Note that $P_n(p)$ can be calculated in the following way:
    
    \begin{equation*}
         P_n(p) = \sum_{S\subseteq N, |S|>n/2}p^{|S|}(1-p)^{n-|S|}. 
    \end{equation*}
    
    Furthermore, $P_{n,w}(p)$ is calculated in the following way: 
    
    \begin{equation*}
        P_{n,w}(p) = \sum_{S\subseteq N, w(S)>w(N)/2}p^{|S|}(1-p)^{n-|S|}.
    \end{equation*}
    
    For any subset $S$, so that $w(S)>w(N)/2$, we have the following two possibilities: 
    
    \begin{enumerate}
        \item In the first case, it holds that $|S|>n/2$. In this case, the corresponding summand in the formula of $P_n(p)$ is the same as in $P_{n,w}(p)$.
        \item In the second case, it holds that $|S|<n/2$. In this case, $|\bar{S}| = n-|S| >n/2$. The summand for $\bar{S}$ in the formula of $P_n(p)$, namely $p^{|\bar{S}|}(1-p)^{|S|}$ is strictly larger than the summand in the formula of $P_{n,w}(p)$ that corresponds to $S$, namely $p^{|S|}(1-p)^{|\bar{S}|}$. In the latter claim we use two properties: $p>\frac{1}{2}$ and $|S|<|\bar{S}|$. 
    \end{enumerate}
    
    That is, for each summand in the formula of $P_{n,w}(p)$, there is a summand in the formula of $P_n(p)$ which is the equal or larger. The case with even $n$ and/or the weight vectors $w$ which have subsets with equal weight to their complement subsets is similar to the case considered in the theorem proof.   
    
\end{proof}
}

\end{document}